\newcommand{\Z}{\mathbb{Z}}
\DeclareMathOperator{\poly}{poly}
\newcommand{\tup}[1]{\mathbf{#1}}
\renewcommand{\leq}{\leqslant}
\renewcommand{\geq}{\geqslant}
\renewcommand{\le}{\leqslant}
\renewcommand{\ge}{\geqslant}
\DeclarePairedDelimiter{\set}{\{}{\}}
\DeclarePairedDelimiter{\norm}{\lVert}{\rVert}
\newcommand{\hy}{\hbox{-}\nobreak\hskip0pt}
\newcommand{\NPc}{$\mathsf{NP}$\hy{}complete\xspace}
\newcommand{\NPh}{$\mathsf{NP}$\hy{}hard\xspace}
\newcommand{\FPT}{$\mathsf{FPT}$\xspace}
\newcommand{\Ptime}{$\mathsf{P}$\xspace}
\setlist[itemize]{topsep=8pt,itemsep=8pt,parsep=0pt} 
\setlist[enumerate]{topsep=8pt,itemsep=8pt,parsep=0pt} 
\newcommand{\aff}[1]{\textcolor{black!60}{\small{#1}}}
\newtheorem{theorem}{Theorem}
\newtheorem{corollary}[theorem]{Corollary}
\newtheorem{claim}[theorem]{Claim}
\begin{document}

\newcommand{\funding}{M.P. was supported by the project BOBR that is funded from the European Research Council (ERC) under the European Union’s Horizon 2020 research and innovation programme with grant agreement No. 948057.
K.P. was supported by the project GA24-11098S of the Czech Science Foundation at the beginning of the project, and by the funding received from the European Research Council (ERC) under the European Union’s Horizon 2020 research and innovation programme (grant agreement No 101002854) towards the end of the project.}

\title{On Integer Programs That Look Like Paths\footnote{\funding}}

\date{}

\author{}
 \author{
   Marcin Briański\\
   \aff{Jagiellonian University} \\
   \aff{marcin.brianski@doctoral.uj.edu.pl}
   \hspace*{-2.2em}
   \and
   Alexandra Lassota\\
   \aff{Eindhoven University of Technology} \\
   \aff{a.a.lassota@tue.nl}
   \and
   Kristýna Pekárková \\
   \aff{AGH University of Krakow} \\
   \aff{pekarkova@agh.edu.pl}
   \and
   Michał Pilipczuk \\
   \aff{University of Warsaw} \\
   \aff{michal.pilipczuk@mimuw.edu.pl}
   \and
   Janina Reuter\\
   \aff{Kiel University} \\
   \aff{janina.reuter@informatik.uni-kiel.de}
 }
\maketitle

\begin{abstract}
Solving integer programs of the form $\min\bigl\{\tup  x\bigm\vert A\tup  x =\tup b,\tup l\le \tup x\le \tup u,\tup x\in\Z^n\bigr\}$ is, in general, \NPh. Hence, great effort has been put into identifying subclasses of integer programs that are solvable in polynomial or \FPT time.  
A common scheme for many of these integer programs is a star-like structure of the constraint matrix. The arguably simplest form that is not a star is a path. 
We study integer programs where the constraint matrix $A$ has such a path-like structure: every non-zero coefficient appears in at most two consecutive constraints. We prove that even if all coefficients of $A$ are bounded by $8$, deciding the feasibility of such integer programs is \NPh via a reduction from $3$-SAT. Given the existence of efficient algorithms for integer programs with star-like structures and a closely related pattern where the sum of absolute values is column-wise bounded by 2 (hence, there are at most two non-zero entries per column of size at most 2), this hardness result is surprising.
\end{abstract}

 \begin{textblock}{20}(-1.75, 4.9)
 \includegraphics[width=40px]{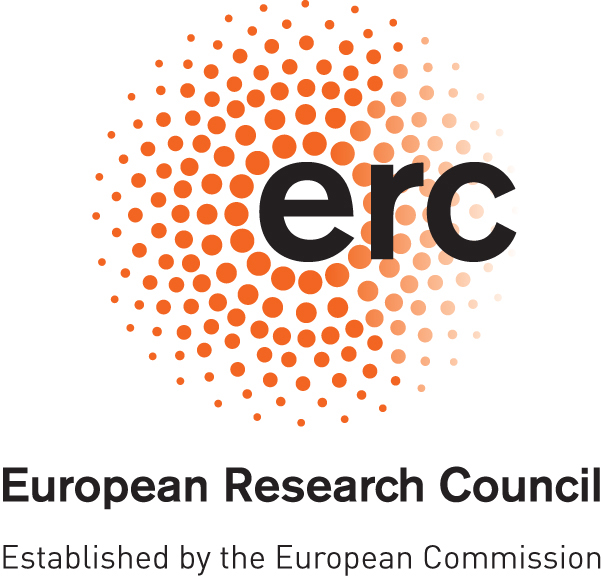}%
 \end{textblock}
 \begin{textblock}{20}(-1.75, 5.9)
 \includegraphics[width=40px]{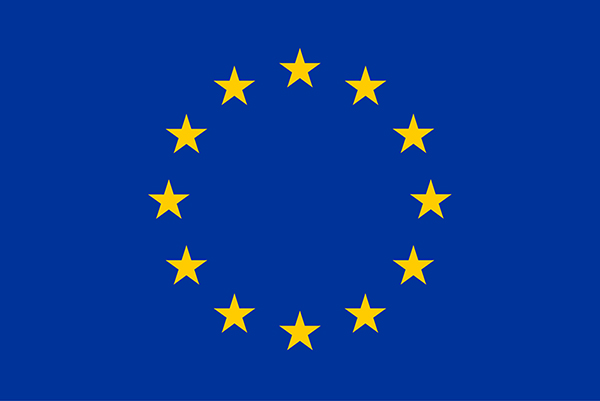}%
 \end{textblock}

\newpage
\clearpage
\setcounter{page}{1}

\section{Introduction}
We study integer programs (to which we refer to as \emph{IP}s) of the form
\begin{equation}
    \min\bigl\{\tup x\bigm\vert A\tup x=\tup b,\tup l\le \tup x\le \tup u,\tup x\in\Z^n\bigr\},
    \label[IP]{IP:general}\tag{G}
\end{equation}
with constraint matrix $A\in\Z^{m\times n}$, right-hand side $\tup b \in \Z^m$, lower and upper bounds $\tup l\in(\Z\cup\{-\infty\})^n$ 
and $\tup u\in(\Z\cup\{\infty\})^n$. Solving such IPs means to either decide that the system is infeasible, there exists an optimal feasible solution (and give the value of it), or it is unbounded and provides a solution with an unbounded direction of improvement. Note that restricting to equality constraints does not restrict the problem, as every inequality constraint can be turned to equality by introducing slack variables (which also preserve the special structures we discuss in this paper).

Integer programming is a powerful tool that has been applied to many combinatorial problems, such as scheduling and bin packing \cite{DBLP:journals/jacm/GoemansR20,DBLP:journals/mp/JansenKMR22}, graph problems \cite{DBLP:conf/isaac/FellowsLMRS08,DBLP:journals/dam/FialaGKKK18}, multichoice optimization \cite{ermolieva2023connections} and computational social choice \cite{bartholdi1989voting,DBLP:journals/teco/KnopKM20}, among others. Unfortunately, solving integer programs is, in general, \NPh.

This motivated a flourishing line of research that aims to understand the structural properties of matrices that allow the corresponding IPs to be solved efficiently, see, e.g.~\cite{DBLP:conf/isaac/FellowsLMRS08,DBLP:journals/jacm/GoemansR20,DBLP:journals/algorithmica/GrammNR03,DBLP:journals/mp/JansenKMR22,DBLP:journals/teco/KnopKM20}. Arguably, the most famous is Lenstra's 1983 algorithm, which introduced the first \FPT time algorithm for constraint matrices with few columns~\cite{DBLP:journals/mor/Lenstra83}. The body of literature for the over three decades-long research and the many structures, parameters, and applications is too vast to cover here; we thus refer to~\cite{GavenciakKK22} for an overview. 

One striking property for many of these IPs is the star-likeness of the constraint matrix, which is probably best exemplified by {\em{$n$-fold}} and {\em{$2$-stage stochastic}} integer programs. Here, an integer program is of $n$-fold form if its constraint matrix, after removing at most~$k$ rows ({\em{aka}} constraints), consists of $n$ independent blocks with at most $k$ rows each. The transpose of this matrix is called $2$-stage stochastic integer programs, i.e., after removal of at most $k$ columns ({\em{aka}} variables), they are decomposed into independent blocks with at most $k$ columns each. As proven in~\cite{DBLP:journals/mp/HemmeckeS03,DBLP:journals/mp/HemmeckeOR13}, the optimization problem for $n$-fold and $2$-stage stochastic integer programs can be solved in time $f(k,\Delta)\cdot |I|^{1+o(1)}$, where $f$ is a computable function, $\Delta$ is the largest absolute coefficient in the constraint matrix, and $|I|$ is the total bitsize of the instance. This running time is \FPT with parameters $k$ and $\Delta$.

The structure of $n$-fold and 2-stage stochastic integer programs can be generalized by allowing multiple stages of deleting rows or columns and decomposing into blocks. This gives rise to so-called {\em{tree-fold}} and {\em{multistage stochastic}} integer programs, which retain efficient solvability, see, e.g.,~\cite{EisenbrandHKKLO19,DBLP:conf/esa/CslovjecsekEPVW21}. The structure of tree-fold and multistage stochastic IPs can be perhaps best understood through decompositions of the associated graphs. For a matrix $A$, we define the {\em{constraint graph}} of $A$ as the graph on the rows (constraints) of $A$ in which two rows are adjacent if they simultaneously have non-zero entries in the same column (in other words, there is a variable appearing in both the corresponding constraints). The {\em{variable graph}} of $A$ is defined symmetrically for the columns (variables) of $A$. Then, the {\em{primal treedepth}} of $A$ is the treedepth of its variable graph, and the {\em{dual treedepth}} of $A$ is the treedepth of its constraint graph. Here, the treedepth of a graph is a parameter that measures star-likeness through the maximum depth of a procedure that alternately deletes a vertex and decomposes the graph into connected components, see for example~\cite[Chapter~6]{sparsity} for a broad introduction. As proven in~\cite{JansenK15,GanianO18,GanianOR17}, the optimization problem for integer programs can be solved in time $f(d,\Delta)\cdot |I|^{1+o(1)}$ for a computable function $f$, where $d$ is either the primal or the dual treedepth of the constraint matrix. We refer the reader to the article of Eisenbrand, Hunkenschr\"oder, Klein, Kouteck\'y, Levin and Onn~\cite{EisenbrandHKKLO19} for a comprehensive introduction to the theory of parameterized algorithms for block-structured integer programming.

It is, therefore, natural to ask to what extent the (primal or dual) treedepth of the constraint matrix captures the efficient solvability of integer programs. Several results in the literature suggest that integer programming quickly becomes \NPh once one relaxes the assumption of bounded treedepth. For instance, Ganian and Ordyniak~\cite{GanianO18} showed that the feasibility problem for integer programs of the form $\{\tup x\,\mid\,A\tup x\leq \tup b\}$ is $\mathsf{NP}$-hard even when all the entries of $A$ and $\tup b$ are in $\{-2,-1,0,1,2\}$ and the treewidth of the constraint graph of $A$ is at most $4$. Their reduction is based on an earlier work of Jansen and Kratsch~\cite{JansenK15}, who achieved a similar result, but without obtaining the bound on the entries. We note that Ganian and Ordyniak stated their result in terms of the treewidth of the {\em{incidence graph}} of $A$: the bipartite graph on rows and columns of $A$ where a row is adjacent to a column if the entry at their intersection is non-zero. As shown in~\cite{GanianO18}, the treewidth of the incidence graph of the constructed matrix is at most $3$, but one can also easily verify that the treewidth of the constraint graph is at most $4$ (bounding the treewidth/treedepth of the constraint graph implies a bound on the treewidth of the incidence graph, but not vice versa.) Similar lower bounds were also reported in the work of Ganian, Ordyniak, and Ramanujan~\cite{GanianOR17}. Later, Eiben, Ganian, Knop, Ordyniak, Pilipczuk, and Wrochna~\cite{EibenGKOPW19} showed that when the incidence graph is considered, even bounded treedepth is not enough: deciding feasibility of integer programs of the form $\{\tup x\,\mid\,A\tup x=0$, $\tup l\leq \tup x\leq \tup u\}$, is $\mathsf{NP}$-hard even when the entries of $A,\tup l,\tup u$ are in $\{-1,0,1\}$ and the incidence graph of $A$ has treedepth $5$. 
Klein, Polak, and Rohwedder~\cite{Klein0R23} show NP-hardness for constraint matrices, where a small $3 \times 5$ block matrix is repeated in a (lower-left) triangle, i.e., the same matrix is repeated over the diagonal and below, and entries in $\{-2,-1,0,1\}$.
Finally. Eiben, Ganian, Knop, and Ordyniak~\cite{EibenGKO18} showed a number of lower bounds showing that allowing $A$ to have arbitrarily large entries quickly leads to $\mathsf{NP}$-hardness, even if $A$ has bounded treedepth.

The goal of this work is to contribute to the understanding of the complexity of integer programming by considering matrices that have the simplest possible form while still having unbounded treedepth. Since length of the longest path (contained as a subgraph) is functionally equivalent to treedepth, the simplest possible graph with arbitrarily large treedepth is a path. Thus, our aim is to understand the complexity of integer programs whose constraint graph\footnote{One could also consider the setting when the variable graph is a path, but a quick examination of this setting reveals that it makes little sense, and in fact reduces to a subcase of the setting where the constraint graph is a path.} is a path. Equivalently, these are integer programs of the form~$\{\tup x\,\mid\,A\tup x=\tup b$, $\tup \ell\le \tup x\le \tup u\}$ where every variable can appear only in at most two constraints, which moreover must be consecutive. Up to permuting the columns, the constraint matrix $A$ must have the path-like structure as depicted in \cref{fig:matrixstructure}: the non-zero entries in each row may only appear in an interval, these intervals are naturally ordered from left to right, and only the intervals in consecutive rows may overlap.

\begin{figure}[!ht]
\centering
\resizebox{0.7\textwidth}{!}{%
\begin{circuitikz}
\tikzstyle{every node}=[font=\Huge]

\draw[fill=lightgray!40, line width=0.5pt] (5,18.75) rectangle (13.5,17.5);
\draw[fill=lightgray!40, line width=0.5pt] (7.5,17) rectangle (22.75,15.75);
\draw[fill=lightgray!40, line width=0.5pt] (15,15.25) rectangle (29.25,14);
\draw[fill=lightgray!40, line width=0.5pt] (31.25,11.5) rectangle (40,10.25);
\draw[fill=lightgray!40, line width=0.5pt] (35,9.75) rectangle (42.25,8.5);

\draw[line width=0.7pt] (5.25,19.5) to[short] (4.5,19.5);
\draw[line width=0.7pt] (4.5,19.5) to[short] (4.5,8);
\draw[line width=0.7pt] (4.5,8) to[short] (5.25,8);
\draw[line width=0.7pt] (42.75,8) to[short] (42.75,19.75);
\draw[line width=0.7pt] (42.75,19.75) to[short] (42,19.75);
\draw[line width=0.7pt] (42.75,8) to[short] (42,8);

\node [font=\Huge] at (27.5,13) {$\ddots$};
\end{circuitikz}
}
\caption{A schematic depiction of a path-like constraint matrix $A$.}
\label{fig:matrixstructure}
\end{figure}
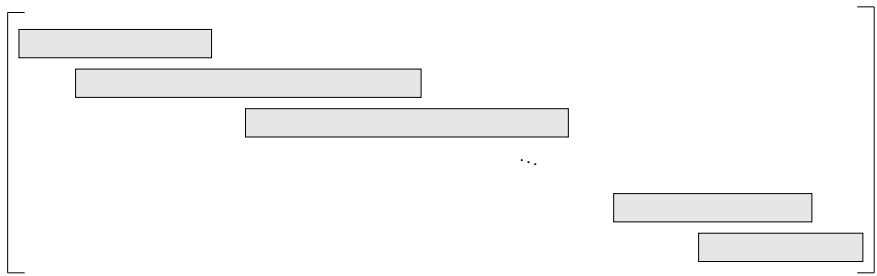

It is known that if the coefficients of the constraint matrix are in $\{-2,-1,0,1,2\}$ and the sum of absolute values is at most $2$ for each column, then the problem is polynomial time solvable~\cite{schrijver2003combinatorial}. Indeed, the corresponding problem called (generalized) matching problem is well-studied. It captures a variety of well-known problems in \Ptime such as minimum cost flow, minimum cost ($b$-)matching and certain graph factor problems~\cite{schrijver2003combinatorial}. These structures even remain \FPT time solvable parameterized by the number $p$ of additional columns~\cite{lassota2025parameterizedalgorithmsmatchinginteger}.

The case of \emph{two} non-zero entries per column seems to be indeed the turning point where analyzing the computational complexity becomes intriguing. Indeed, it is easy to see that the problem is \NPh when \emph{three} non-zero entries per column are allowed and entries are in $\{-1,0,1,2\}$. Consider the following reduction from subset sum where $n$ non-negative numbers $a_1, \ldots, a_n$ and a target $t\in\mathbb{Z}_{\geq 0}$ are given and the goal is to decide whether a subset of $\{a_1, \ldots, a_n\}$ sums up to $t$. Let $enc(x)$ be the binary encoding of a non-negative number $x$ and consider the matrices
\begin{align*}
    E = \begin{pmatrix}
        -1 & 2 & && \\
        &-1 & 2 & & \\
        && \ddots & \ddots &\\
        &&& -1 & 2
    \end{pmatrix}\qquad \text{ and }
    \qquad A = \begin{pmatrix}
        enc(a_1) &  \ldots & enc(a_n) \\
        E &&  \\
        & \ddots &  \\
        && E
    \end{pmatrix}.
\end{align*}
The first matrix, $E$, is called the \emph{encoding matrix} and is often used in lower bounds for IPs with a special structure of non-zero entries in the constraint matrix. The IP defined by the constraint matrix $A$, right-hand side $\tup b=(t, 0, \ldots, 0)$, and suitable upper and lower bounds $\tup l$ and $\tup u$ is feasible if and only if the corresponding subset sum instance is feasible. Clearly, the constraint matrix $A$ has at most three non-zero entries per column.

In stark contrast to the above-mentioned results for an even more restricted form in terms of the placement of non-zero coefficients, we show that the feasibility problem for integer programs with path-like constraint matrices is \NPh already when all entries are bounded by a constant.

\begin{theorem}\label{thm:lb}
    The problem of deciding whether a given integer program $\{\tup x\,\mid\,A\tup x=\tup b$, $\tup l\le \tup x\le \tup u\}$ has a solution is \NPh even when $A$ is a path-like matrix with all entries of absolute value at most $8$. Here, we assume that $\tup b$ is provided on input in binary.
\end{theorem}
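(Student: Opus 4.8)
The plan is to reduce from $3$-SAT. Fix a formula $\varphi$ with variables $v_1,\dots,v_n$ and clauses $C_1,\dots,C_m$, each a disjunction of three literals; I will build in polynomial time a path-like IP that is feasible if and only if $\varphi$ is satisfiable, with all entries of absolute value at most $8$ and the hard data placed in the binary right-hand side $\tup b$. The single most important tool is the \emph{encoding-matrix trick} already visible in the excerpt: a chain of rows of the shape $x_j - 2x_{j+1}=0$ is itself path-like (each variable meets only its two neighbours, the largest coefficient is $2$), yet it lets one carried variable represent values as large as $2^{\text{length}}$. Thus arbitrarily large magnitudes -- and in particular the large numbers hidden in $\tup b$ -- can be synthesised while keeping every matrix entry tiny. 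This is what reconciles bounded entries with computational hardness.

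The decisive structural constraint is that the constraint graph must be a \emph{path}: every variable may occur in at most two consecutive rows. In particular one cannot maintain several independent ``wires'' running in parallel and couple them at a station, since the coupling variable would then meet three or more rows; equivalently, the path model only permits genuinely sequential information flow. My reduction therefore pushes the \emph{entire} state of a verification of $\varphi$ into a single integer that is passed from each row to the next through the shared variables, and each row performs one elementary, positionally local update of this integer using only bounded coefficients. Concretely, a first segment of the path nondeterministically guesses a truth assignment -- free local variables forced into $\{0,1\}$ by the bounds $\tup l,\tup u$ -- and folds it into the carried integer; a second segment replays the assignment bit by bit and, for each clause, accumulates the number of satisfied literals into a small counter packed alongside, demanding (again through $\tup l,\tup u$) that this counter be at least $1$ before the clause is retired. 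The fixed, instance-dependent information -- which variable occurs, and with which polarity, in which clause -- is injected row by row through the binary right-hand side $\tup b$, while the bounds simultaneously keep every packed field inside its intended range so that the positional arithmetic never produces a spurious carry that would corrupt the state.

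Correctness then splits into the two routine directions: a satisfying assignment yields a faithful execution of the sequential check and hence a feasible point, and conversely the bounds force any feasible point to encode some assignment together with a valid clause-by-clause verification, so $\varphi$ is satisfiable. Polynomiality of the construction and the entry bound are to be read off the explicit gadgets. The real difficulty -- and where essentially all the work goes -- is the tension between the strictly sequential access pattern imposed by the path and the fact that a clause must ``see'' the values of up to three arbitrary, possibly far-apart variables at once. Resolving this is the crux: one must design the single-integer encoding and its per-row transformations so that, simultaneously, (i)~every variable touches at most two consecutive rows, (ii)~all coefficients stay at most $8$, and (iii)~the bounds pin down exactly the faithful executions and exclude all spurious ones. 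Verifying these three properties together for the concrete gadgets is the main obstacle I expect to spend the proof on.
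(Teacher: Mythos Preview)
Your high-level plan matches the paper's: reduce from $3$-SAT, carry the entire state as a single integer along a path, and filter out non-satisfying assignments clause by clause. You also correctly identify the real obstacle, namely that a clause must read bits at three arbitrary positions of the assignment while the path only allows strictly sequential access. What is missing is an actual mechanism that resolves this obstacle, and this is not a detail one can defer to ``verifying the concrete gadgets'': it is the heart of the proof.

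Concretely, your ``replay the assignment bit by bit and accumulate a counter'' scheme does not work as stated. If you peel off bits of the carried integer by repeated division by~$2$ to inspect them, the assignment is destroyed after one pass, so you can check only a single clause; the paper makes exactly this observation. Keeping instead $m$ clause-counters packed alongside the assignment would require addressing the $j$th counter with a coefficient of order~$2^{\Theta(j)}$, violating the entry bound. And you cannot simply manufacture $m$ copies of the guessed assignment up front, because in a path-like IP a variable touches only two consecutive rows, so there is no way to equate two copies that are far apart.

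The paper's resolution is a number-theoretic identity: for $x\in\{1,\dots,2^n-1\}$ one has $x=\bigl\lfloor\,\lfloor x\cdot 5^n/2^n\rfloor\cdot 2^n/5^n\,\bigr\rfloor+1$, and moreover $x\mapsto 5^n x\bmod 2^n$ is a bijection on $\{1,\dots,2^n-1\}$. So the pipeline for each clause is: multiply the carried value by~$5^n$ (a chain of multiply-by-$5$ rows), then divide by~$2^n$ (a chain of divide-by-$2$ rows) while inspecting the remainders --- these are exactly the bits of the assignment --- and updating a small satisfaction counter packed in the low-order part; test the counter; then multiply by~$2^n$ and divide by~$5^n$ to \emph{restore} the original carried value, ready for the next clause. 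The packing of the $\{0,1,2,3\}$-valued counter next to the divide-by-$2$ step is what forces the coefficient~$8$. Without this restoration trick (or an equivalent idea), the sequential-access constraint cannot be reconciled with checking more than one clause, and your proposal does not yet supply one.
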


We stress that the assumption that $\tup b$ is encoded in binary is essential. Indeed, since the incidence graphs of path-like matrices have treewidth at most~$2$, from the result of Ganian, Ordyniak, and Ramanujan~\cite[Theorem~13]{GanianOR17} it follows that the feasibility of integer programs of the form discussed in \cref{thm:lb} can be decided in time polynomial in $\|A\|_\infty+\|\tup b\|_{\infty}$ and the total bitsize of the input.

\cref{thm:lb} therefore provides a strong lower bound for integer programming and intuitively shows that any deviation from constraint matrices with bounded treedepth renders the problem intractable. The proof of the theorem is significantly more elaborate than the proofs of the lower bounds from~\cite{JansenK15,GanianO18,GanianOR17,EibenGKOPW19,EibenGKO18} mentioned above, which to a large extent relied on a direct encoding of {\sc{Subset Sum}} as an integer program with a simple structure. Our reduction is from {\sc{3-SAT}}. Before we prove this result in Section~\ref{sec:hardness}, we give an intuitive argument and sketch of the reduction at the beginning of the section. 

\section{Hardness}\label{sec:hardness}
We now prove that path-like integer programs are \NPh even when all coefficients of the constraint matrix are bounded by a constant. We prove this by reducing the {\sc{3-SAT}} problem to path-like IPs. 

The main idea is to design constraints such that each constraint $C_i$ in the matrix restricts the solution space of feasible assignments to be feasible for the first $i$ constraints. This way, starting with a constraint that encodes all possible variable assignments, we propagate and constrict the feasible assignments according to the clauses.
We construct the matrix in such a way that every variable occurs at most twice and in consecutive constraints. We say that the first constraint for a variable \emph{introduces} the variable, and that in a (potential) second constraint, the solution space of the variable is \emph{forwarded} or \emph{output}.
We use a limited number of different \emph{constraint types} to transform the solution space of a forwarded variable to a solution space for an introduced variable. Here, we consider the solution space as the set of feasible solutions for that variable considering the current and all previous constraints but ignoring all following constraints. The constraint types of the reduction are summarized in~\autoref{tab:path-like_operations}.

\begin{table}[h!]
    \centering
    \begin{tabular}{lll}
        \toprule
         Operation & Constraint \\
         \midrule
         Forward & $x-y=0$ \smallskip\\
         Check $x\geq a$& $x - y = a$ \smallskip\\
         Multiplication by $a\in \Z$ & $a x - y = 0$ \smallskip\\ 
         Division with remainder by $a\in \Z$ & $x - ay - r = 0$ and $0\leq r < a$ \smallskip\\ 
         Combination of two constraints  &  $\tup c_1^\intercal \tup x_1 = \tup b_1$ and $\tup c_2^\intercal \tup x_2 = \tup b_2$ with $\tup 0\leq \tup x_1 \leq \tup u$  \\
         with independent variables & combined to $ \tup c_1^\intercal \tup x_1 + a \cdot \tup c_2^\intercal\tup x_2 =\tup b_1 + a\tup b_2$ \\
         & with $a> \tup c_1^\intercal \tup u$
    \end{tabular}
    \caption{Types of constraints and the operation that is thereby performed on the solution space from an output variable $x$ and an introduced variable $y$ that are used in the proof of Theorem~\ref{thm:hardness}.}
    \label{tab:path-like_operations}
\end{table}

A cornerstone for the reduction is a number-theoretical identity that allows us to perform operations that check satisfaction of a clause while preserving the information stored in that variable. In particular, we need to look up the assignment of specific variables, check whether they yield a true literal in the clause, and then restore the original representation of the assignment. Considering only one clause, it would be relatively easy to repeatedly divide by $2$ with remainder and check whether the remainder implies a true literal. However, this procedure is in fact limited to one clause, as the information of the satisfying assignment are lost in the process.

\begin{theorem}\label{thm:hardness}
    The problem of deciding the existence of a vector \(\tup x \in \mathbb{Z}^n\)
    satisfying \(A\tup x = \tup b\), \(\tup l \leq \tup x \leq \tup u\) is \NPh even when
    \(A\) is a hollow path-like matrix with \(\norm{A}_{\infty} \leq 8\).
\end{theorem}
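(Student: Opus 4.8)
The plan is to reduce from {\sc 3-SAT}, encoding a truth assignment to the Boolean variables $z_1,\dots,z_k$ as a single integer that is threaded through the sequence of constraints so that each clause can be ``read off'' and checked without destroying the encoding. Concretely, I would represent an assignment by the integer $N = \sum_{j=1}^{k} z_j \cdot B^{j}$ in a base $B$ large enough (say $B$ a power of $2$ bounded by the allowed coefficient range) to prevent carries; then each variable's truth value is recoverable as a specific ``digit'' via repeated division with remainder. The constraint types in \autoref{tab:path-like_operations} (forward, check $x\ge a$, multiply by $a$, divide-with-remainder by $a$, and the combination gadget) are exactly the primitive operations I would stitch together: each gadget uses a fresh introduced variable $y$ tied to the forwarded variable $x$ of the previous constraint, so every variable occurs in at most two consecutive rows, yielding the path-like (hollow staircase) structure with $\norm{A}_\infty$ bounded by a constant.

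The heart of the construction is a clause-checking gadget. For a clause $C = \ell_{a}\vee\ell_{b}\vee\ell_{c}$, I would first isolate the three relevant digits of $N$ by a sequence of divide-with-remainder steps (extracting $z_a,z_b,z_c\in\{0,1\}$), then compute a small quantity that is positive if and only if at least one literal is satisfied, and enforce that quantity to be at least $1$ using the check gadget. The crucial subtlety---flagged in the paper's remark that ``the information of the satisfying assignment is lost in the process''---is that after inspecting the digits I must \emph{reconstruct} $N$ (or an equivalent re-encoding of the same assignment) to feed into the next clause's gadget. This is where I expect the number-theoretic identity alluded to in the excerpt to be essential: I would look for an algebraic relation, built from the division remainders and quotients, that simultaneously certifies clause satisfaction and recombines the extracted digits back into the original integer, so that the invariant ``the forwarded value encodes the same assignment'' is preserved across all $m$ clauses.

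I would organise the proof as follows. First, fix the base-$B$ encoding and prove a structural lemma stating that the feasible solutions of the partial system after the ``introduction'' block are exactly the integers $N$ corresponding to valid $\{0,1\}$-assignments (this uses the division-with-remainder gadget together with the bounds $\tup l,\tup u$ to force each digit into $\{0,1\}$). Second, establish correctness of the single-clause gadget: given a forwarded $N$, the gadget is satisfiable if and only if the encoded assignment satisfies the clause, and in that case the output variable again equals $N$ (the re-encoding step). Third, chain the $m$ clause gadgets in series, invoking the combination-of-constraints operation to keep coefficients bounded while merging the independent pieces, and argue by induction that the whole system is feasible iff the formula is satisfiable. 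Finally, I would verify the two global side-conditions: that every variable appears in at most two consecutive constraints (so $A$ is hollow path-like) and that all entries stay within absolute value $8$ (controlling the divisors and the multiplier $a>\tup c_1^\intercal\tup u$ in the combination gadget to respect this bound).

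I expect the main obstacle to be the \emph{simultaneous} digit-extraction-and-reconstruction with only constant-size coefficients: naively, reading out a digit via repeated halving and then rebuilding the number is cheap, but checking a disjunction over three arbitrary digits while restoring the full encoding---and doing so with every matrix entry at most $8$---forces a careful, somewhat clever sequence of small-base operations rather than one big-coefficient step. Getting the bookkeeping of remainders to close up into an exact identity that both validates the clause and yields back $N$ is, I anticipate, the technically delicate core of the argument, and likely the reason the reduction is ``significantly more elaborate'' than the earlier {\sc Subset Sum}-based lower bounds.
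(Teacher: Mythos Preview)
Your outline has the right skeleton and you have correctly located the hard step, but the proposal does not actually supply the idea that makes the reconstruction possible --- and your direct base-$B$ encoding cannot be made to work with a single forwarded value and constant coefficients. If the threaded integer $N$ itself carries the assignment in its low bits, then after you strip off a bit by division with remainder you must retain \emph{both} the remainder and the quotient to rebuild $N$; with only one value propagated along the path and entries bounded by $8$, there is no way to re-inject a bit into the high end (that would need a coefficient like $2^{k-1}$), so the ``recombines the extracted digits back'' step you are hoping for is exactly where the argument breaks down.

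The paper's resolution is an indirect encoding. The assignment is \emph{not} stored in the bits of the threaded value $x\in\{1,\dots,2^n-1\}$ but in the bits of $5^n x \bmod 2^n$; since $5$ is odd, multiplication by $5^n$ is a bijection modulo $2^n$, so this still ranges over all nonzero assignments. The point is the identity
\[
x \;=\; \bigl\lfloor \lfloor 5^n x / 2^n \rfloor \cdot 2^n / 5^n \bigr\rfloor + 1 \qquad (1\le x\le 2^n-1),
\]
which says that after discarding \emph{all} the low bits during the clause check, the single surviving quotient $\lfloor 5^n x/2^n\rfloor$ already determines $x$ --- no remainders need to be stashed. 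The per-clause pipeline is therefore: multiply by $5$ ($n$ times), divide by $2$ ($n$ times) while updating a small counter $f$ of true literals from the remainders, check $f\ge 1$, then multiply by $2$ ($n$ times), divide by $5$ ($n$ times), and add $1$. All coefficients here are at most $5$; the $8$ appears because the divide-by-$2$ row and the counter update (with $f\in\{0,1,2,3\}$) are fused into one constraint via your ``combination'' gadget, namely $4\cdot(\text{divide-by-}2)+(\text{counter})$, giving the coefficient $4\cdot 2=8$ on the introduced $y$-variable. Once you adopt this $5^n/2^n$ trick, the rest of your plan (induction over clauses, verifying the path-like shape, bounding entries) goes through essentially as you describe.
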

\begin{proof}
    We reduce from 3-SAT, which is known to be \NPc~\cite{DBLP:conf/stoc/Cook71}, i.e., we define a hollow path-like matrix $M$, lower and upper bounds $\tup l$ and $\tup u$, and a right-hand side $\tup b$ such that $M\tup x=\tup b$ has a solution $\tup l\leq \tup x \leq \tup u$ if and only if the 3-SAT formula is satisfiable.

    Let \(\varphi = \bigwedge_{i=1}^{m} C_i\) be a 3-SAT formula over the set of variables
    \(V = \set{v_1, v_2, \dots, v_n}\).
    We begin by checking whether the all-zero assignment \(\nu \colon V \ni v_j \mapsto 0 \in \set{0,1}\)
    is a satisfying assignment. If it does satisfy \(\varphi\), then the construction outputs a trivial
    instance where $A$ is the zero matrix, and $\tup b, \tup l, \tup u$ are zero vectors, i.e., \(\begin{bmatrix}0\end{bmatrix}\tup v = \begin{bmatrix}0\end{bmatrix}\), \(\tup l = \tup u = \begin{bmatrix}0\end{bmatrix}\).
    Henceforth, we assume that the constant 0 assignment is not a satisfying assignment of \(\varphi\).



   We first construct a constraint where the set of solutions contains all the $2^n -1$ possible variable assignments for the 3-SAT formula, except for the assignment mapping all variables to $0$, as follows:
    \begin{align*}
        x_{0,0} - \alpha &= 0,
    \end{align*}
    where the lower and upper bounds are $\alpha \in \{1,\ldots, 2^n-1\}$. The constraint is feasible for variable $x_{0,0} \in \mathcal A_0 =\{1,\ldots, 2^n-1\} = \{ [b_{n-1}b_{n-2}\ldots b_0]_2 \neq 0 \mid b_i \in\{0,1\}\}$, where $[b_{n-1}b_{n-2}\ldots b_0]_2$ represents the number $\sum_{i=0}b_i2^i$.

    The main part is to delete all representations of  non-satisfying assignments from the set. Each element of $\mathcal A_i$ represents a variable assignment that satisfies all clauses $C_j$ with $j\leq i$. Therefore, in order to generate the next set $\mathcal A_{i+1}$, we describe constraints that delete assignments that do not satisfy clause $C_{i+1}$. 
  
    Although it might seem natural to do the checking of the clause directly on the set $\mathcal A_{i}$, we need a trick in order to preserve the information about the assignment for checking further clauses.
    \begin{claim}\label{claim:xidentity}
    The following equations are true.
    \begin{enumerate}
        \item $x= \lfloor \lfloor x \cdot5^n/2^n \rfloor 2^n/5^n \rfloor +1$ for all $x\in\{1,\ldots 2^n-1\}$ 
        \item $\{x\cdot 5^n \mod 2^n\mid x\in\{1,\ldots 2^n-1\}\} = \{1,\ldots 2^n-1\}$
    \end{enumerate}
    \end{claim} 
    \begin{proof}
    Regarding the first equation, we get that the first floor rounds down by $0< (5^nx/2^n - \lfloor 5^nx/2^n \rfloor) \cdot 2^n/5^n < 1$ and the second floor rounds down by $2^ny/5^n - \lfloor 2^ny/5^n \rfloor < 1$ for $y = \lfloor 5^nx/2^n \rfloor$. In total, the rounding scrapes off more that $0$ but less than $2$, starts with an integer and ends with an integer. Thus, adding $1$ after the second rounding restores $x$. 
    The second equation follows directly as $2$ and $5$ are coprime.
    \end{proof}

    Roughly speaking, we will use the second equality in \autoref{claim:xidentity} to extract clause satisfiability from the remainder during the division by $2^n$ and then restore the original set of candidate assignments using the first identity of \autoref{claim:xidentity}.
    For any $x \in \mathcal A_i$ the represented assignment are the bits of $[b_{n-1}\ldots b_0]_2 = 5^nx\bmod 2^n$.   
    The following steps recreate the above identity while checking one clause for each application.\\
    
    \textbf{Preparation phase.} We first construct the set $\mathcal B_i = \{5^n \cdot a \mid a \in \mathcal A_i\}$. The following constraints transform variable $x_{i,0}$ with feasible assignments $\mathcal A_i$ to variable $y_{i,0}$ with feasible assignment $\mathcal B_i$
     \begin{align*}
        5x_{i,j} -  x_{i,j+1} &= 0 \quad \text{for all } 0\leq j < n\\
        x_{i,n} - y_{i,0} &= 0.
     \end{align*}
     Here, every constraint of the first type is a multiplication-by-$5$-type of constraint (see~\autoref{tab:path-like_operations}) and hence ensures that any feasible value for $x_{i,j+1}$ is five times any feasible value for $x_{i,j}$. The second constraint only adds syntactical sugar with $x_{i,n}=y_{i,0}$ to simplify variable names and readability in the next step.\\
     
    \textbf{Checking clause.} Following \autoref{claim:xidentity}, we use division by $2^n$ while checking the remainder for satisfaction of clause $C_i$. We use an additional variable $f_{i,j}$ to remember whether the clause is satisfied. This variable is increased if a literal of the clause is true for the assignment. If $f_{i,n}\neq 0$ after the division with remainder, then the clause is satisfied. 
    The goal is to achieve a set of feasible values for $(f_{i,n}, y_{i,n})$ of
    \begin{align*}
        \mathcal C_i = \{(f,\lfloor b / 2^n \rfloor) \mid b\in \mathcal{B}_i \text{ and }f = t_i(b)\},
    \end{align*}
    where $t_i(b)$ is the number of true literals in clause $C_i$ for assignment $[b_{n-1}\ldots b_0]_2 = b \bmod 2^n$.
    To be precise, in one case the constructed set contains additional elements where $f\leq t_i(b)$. However, this does not effect correctness as the following step will only check for values of $f>0$, hence for assignemnts, where there was at least one true literal found.
    
    For each SAT variable $v_j$, where $1\leq j \leq n$, we use the following types of constraints to transform any feasible value for $y_{i,0}$ from $\mathcal B_i$ to a corresponding pair $(f_{i,n}, y_{i,n})$ from $\mathcal C_i$:

    \begin{align*}
        f_{i,j} - f_{i,j+1} + 4 y_{i,j} - 8 y_{i,j+1} - 4\alpha_{i,j} &= 0 \quad \text{if }v_{j+1}\text{ is not in clause }C_i\\
        f_{i,j} - f_{i,j+1} + 4y_{i,j} - 8y_{i,j+1} - 3\alpha_{i,j} &= 0 \quad \text{if }v_{j+1}\text{ is in clause }C_i\\
        f_{i,j} - f_{i,j+1} + 4y_{i,j} - 8y_{i,j+1} - 5\alpha_{i,j} + \alpha'_{i,j} &= 0 \quad \text{if }\lnot v_{j+1}\text{ is in clause }C_i.\\
    \end{align*}
    We define variable bounds $\alpha_{i,j}, \alpha'_{i,j} \in \{0,1\}$. Further we define  $f_{i,j} \in \{0,1,2,3\}$ if all literals in clause $C_i$ are positive literals and  $f_{i,j} \in \{0,1,2\}$ if there is at least one negative literal. 

    The first constraint forwards the set of feasible values for $f_{i,j}=f_{i,j+1}$ and for every feasible value of $y_{i,j+1} = \lfloor y_{i,j}/2 \rfloor$. This is a combination of a division with remainder constraint type (for $y$ variables) and a forward constraint type (for $f$ variables), compare with \autoref{tab:path-like_operations}. As the variable $v_j$ is not in the clause, the remainder is irrelevant and gets absorbed by $\alpha_{i,j}$. The division by two, if independent from the context, could be achieved by 
    \begin{align*}
        y_{i,j} - 2y_{i,j+1} - \alpha_{i,j} = 0\quad\text{ and }\alpha_{i,j}\in\{0,1\}.
    \end{align*}
    However, we need to ensure that the feasible values for $f_{i,j}$ and $y_{i,j}$ do not interfere and create new feasible values. Hence, the multiplication of the above simple form of the division by $4 > f_{i,j+1} \in \{0,1,2,3\}$ makes sure that the feasible values transform independently as any value of $f_{i,j+1}$ can not compensate for an unwanted change in $y_{i,j+1}$.

    In the second and third type of constraint, the feasible values for $f_{i,j}$ are incremented in $f_{i;j+1}$ depending on the remainder from the division. As before, this is similar to a division with remainder constraint type and a forward constraint type. However, the variables should not evolve independently and instead the solution space for $f_{i,j+1}$ depends on the remainder of the division performed on $y_{i,j}$. 
    In the second constraint it is incremented if the remainder is $1$ and in the third constraint it \emph{could or could not} be incremented if the remainder is $0$. 
    
    In the case of the second constraint type, this is achieved from a coefficient of $3$ for $\alpha_{i,j}$ instead of $4$ (compared to the first type of constraint). 
    Then, in case of a remainder of $1$ (representing a true literal), the variable $f_{i,j+1}$ has to make up the difference of $1$. 
    
    In the third constraint type, if the remainder is $1$ and hence the literal is false, then the  feasible values either set $\alpha_{i,j} = \alpha'_{i,j} = 1$ and forwards the value of $f_{i,j+1} = f_{i,j}$ or set $\alpha_{i,j} = 1$, $\alpha'_{i,j} = 0$ and forwards the value of $f_{i,j+1} = f_{i,j} -1$, which is only possible if $f_{i,j}>0$. Hence, this does not transform an invalid $f_{i,j}$ to a false positive $f_{i,j+1}>0$ and similarly, any positive $f_{i,j}$ is forwarded. In the case that the remainder is $0$ and hence the literal is true, there are again two possible transformations. Either we set $\alpha_{i,j} = \alpha'_{i,j} = 0$ and $f_{i,j+1} = f_{i,j}$ or we set $\alpha_{i,j} =0$, $\alpha'_{i,j} = 1$, and $f_{i,j+1} = f_{i,j}+1$. In other words, if this literal is true and hence the clause is satisfied, the set of feasible solutions for $f_{i,j+1}$ contains also a value of $f_{i,j}+1>0$. The additional possibility of an unchanged $f_{i,j+1} = f_{i,j}$ does not matter as we eventually only require one feasible version of this assignment, when testing in the next step. Hence, this derivation from the defined set $\mathcal C_i$ have no implications on the reduction.
    Moreover, note that if this constraint is present $f_{i,j}, f_{i,j+1} \in \{0,1,2\}$ and thus $-f_{i,j+1} - 5\alpha_{i,j} > -8$ to ensure correct division by $2$. This does not alter the set of feasible assignments with $f_{i,j+1}>0$, since $f_{i,j}=f_{i,j+1}$ is feasible with $\alpha'_{i,j}=0$.\\
    
    \textbf{Delete not satisfied clauses.} Next, we delete any assignment that does not satisfy the clause $C_i$. We add feasible values $(f_{i,n}, y_{i,n})$ from $\mathcal C_i$ to the next set $\mathcal{D}_i$ iff $f_{i,n}>0$. We want to achieve a set of feasible solutions for $z_{i,0}$ of $\mathcal D_i = \{c \mid (f,c) \in \mathcal C_i \text{ with }f>0\}$. We can achieve this with the  constraint
    \begin{align*}
        f_{i,n} + 4y_{i,n} - 4 z_{i,0} - \beta_i = 1,
    \end{align*}
    where $\beta_i \in \{0,1,2\}$. Since $\beta_{i} < 3$, this constraint is feasible iff $f_{i,n}>0$ and then $y_{i,n} = z_{i,0}$. This is again a combination of constraint types with a checking type for $f_{i,n}$ and a forward type for $y_{i,n}$ to $z_{i,0}$. Also note that this transformation eliminates any derivation from $\mathcal C_i$ in the case of negated variables present in the clause as explained in the previous step.\\

    \textbf{Restore representation.} For the current clause $C_i$, it remains to restore the original representation using the identity from \autoref{claim:xidentity}. Hence, we transform any feasible value for $z_{i,0}$ from $\mathcal D_i$ to a value for $x_{i+1,0}$ from
    \begin{align*}
        \mathcal A_{i+1} = \{\lfloor d\cdot 2^n/5^n\rfloor +1 \mid d \in \mathcal{D}_i\}.
    \end{align*}
    The transformation uses similar multiplication and division constraint types as before, i.e.
    \begin{align*}
        2z_{i,j} - z_{i,j+1} &= 0 \quad \text{for all } 0\leq j < n\\
        z_{i,n+j} - 5z_{i,n+j+1} - \gamma_{i,j} &= 0  \quad \text{for all } 0\leq j < n\\
        z_{i,2n} - x_{i+1,0} &= -1,
    \end{align*}
    with variable bounds $\gamma_{i,j} \in\{0,1,2,3,4\}$.\smallskip

    Adding the constraints for \emph{preparations phase}, \emph{checking clause}, \emph{delete not satisfied}, and \emph{restore representation} for all clauses $C_1, \ldots, C_m$ results in a  constraint matrix $M$ with a hollow path-like structure such that the IP is feasible if and only if the 3-SAT formula is satisfiable. To be precise, \emph{restore representation} is not required for the last clause $C_m$ as it does not impact feasibility of the IP. 
    
    The set of feasible solutions for the IP at position $x_{0,0} = x_{1,0} = \ldots = x_{m,0}$ is exactly the set $\mathcal A_m$. Solutions of the 3-SAT formula are given by $5^n\cdot x_{m,0} \mod 2^n$ for any $x_{m,0}\in \mathcal A_m$. In other words, the set
    \begin{align*}\mathcal S=\{5^n\cdot a \mod 2^n \mid a\in \mathcal A_m\}.
    \end{align*}
    consists of all satisfying assignments $\nu \colon V \ni v_j \mapsto b_j \in \set{0,1}$ of $\varphi$ for $[b_{n-1}\ldots b_0]_2 \in \mathcal S$ and any feasible solution to the hollow path-like IP sets variable $x_{m,0}$ to an element of $\mathcal A_m$. 

    Hence, this problem is hard.    
\end{proof}

We use the construction from the above proof to additionally derive a fine-grained lower on the running time to solve path-like IPs assuming the Exponential Time Hypothesis (ETH). 

\begin{corollary}
    Assuming the ETH, the problem of deciding the existence of a vector \(\tup x \in \mathbb{Z}^n\) satisfying \(A\tup x = \tup b\), \(\tup l \leq \tup x \leq \tup u\) cannot be solved in time $2^{\delta \sqrt{n}} \cdot \poly(|I|) $ for some $\delta>0$, where $|I|$ denotes the encoding length, even when \(A\) is a hollow path-like matrix with \(\norm{A}_{\infty} \leq 8\).
\end{corollary}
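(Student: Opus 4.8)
The plan is to apply the standard recipe for deriving ETH-based lower bounds, exploiting the fact that the reduction underlying \cref{thm:hardness} incurs only a \emph{quadratic} blow-up in the number of variables. To avoid a clash with the symbol $n$ in the statement, let me write $N$ for the number of variables and $M$ for the number of clauses of the input 3-SAT formula $\varphi$, and reserve $n$ for the number of variables of the integer program produced by the reduction.

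First I would invoke the Sparsification Lemma, which lets me assume that the input 3-SAT instance satisfies $M = O(N)$; under ETH this sparse version still admits no $2^{o(N)}$-time algorithm, and in fact there is a constant $c>0$ for which no algorithm runs in time $2^{cN}$. Next I would carry out a careful bookkeeping of the construction in the proof of \cref{thm:hardness}. Each clause $C_i$ contributes a constant number of constraint blocks -- \emph{preparation phase}, \emph{checking clause}, \emph{delete not satisfied}, and \emph{restore representation} -- and each block introduces $O(N)$ fresh variables: a chain of length $\Theta(N)$ coming from the repeated multiplications and divisions, together with the auxiliary variables $f_{i,j}, \alpha_{i,j}, \alpha'_{i,j}, \gamma_{i,j}$. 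Summing over the $M$ clauses gives $n = O(NM)$, and with $M = O(N)$ this is $n = O(N^2)$, equivalently $\sqrt n = O(N)$. I would also record that the largest magnitudes in the instance are the bounds $\alpha \in \{1,\dots,2^N-1\}$ and the powers $5^N, 2^N$ implicit in the constraint chains, all of bitlength $O(N)$; hence the total encoding length is $|I| = \poly(N)$.

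With these estimates in place, the contradiction is routine. Let $c>0$ be the ETH constant for which sparsified 3-SAT admits no $2^{cN}$-time algorithm, and let $C>0$ satisfy $\sqrt n \le C N$. Set $\delta \df c/(2C)$. If the path-like IP feasibility problem could be solved in time $2^{\delta\sqrt n}\cdot\poly(|I|)$, then composing the reduction with such an algorithm would decide satisfiability of the sparsified formula in time $2^{\delta\sqrt n}\cdot\poly(|I|) \le 2^{\delta C N}\cdot\poly(N) = 2^{cN/2}\cdot\poly(N)$, using $\sqrt n \le CN$ and $|I| = \poly(N)$. For large $N$ this is $o(2^{cN})$, contradicting ETH. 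Hence for this explicit $\delta>0$ no such algorithm exists, which is precisely the claim.

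The step I expect to require the most care is the variable count: one must verify that the construction really spends only $O(N)$ variables per clause -- in particular that the $\Theta(N)$-length chains of the preparation and restoration phases, rather than some hidden super-linear gadget, dominate -- so that the global bound is genuinely $n = O(NM)$ and the $\sqrt n$ in the exponent is matched by $\Theta(N)$. Once the quadratic bound $n = O(N^2)$ is nailed down, the remainder is the textbook ETH transfer.
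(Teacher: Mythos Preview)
Your proposal is correct and follows essentially the same approach as the paper: invoke the Sparsification Lemma to get $M=O(N)$, bound the number of IP variables by $O(NM)=O(N^2)$ via the per-clause $O(N)$ chains and auxiliary variables, and then run the standard ETH transfer. If anything, your version is more explicit than the paper's about the choice of~$\delta$ and about verifying $|I|=\poly(N)$.
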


\begin{proof}
    Consider a 3-SAT formula $\varphi$ over $\Tilde{n}$ variables. Using sparsification, assume that $\varphi$ has $O(\Tilde{n})$ clauses. The ETH states that there exists $\Tilde{\delta} >0$ such that 3-SAT can not be solved in time $2^{\Tilde{\delta} \Tilde{n}}$.

    The transformation in the proof of \autoref{thm:hardness} gives us an equivalent path-like IP with the largest entry in the coefficient matrix of $\norm{A}_\infty \leq 8$ and $\norm{\tup l}_\infty <\norm{\tup u}_\infty < 2^n$. The complexity of the transformation is dominated by the variables $x_{i,j}$, $y_{i,j}$, $z_{i,j}$, $z_{i,\Tilde{n} + j}$, $f_{i,j}$, $\alpha_{i,j}$, $\alpha'_{i,j}$, $\gamma_{i,j}$, where $i$ iterates over the clauses and $j$ over the variables (and possibly one more in both cases) of $\varphi$. Hence, the number of variables in the path-like IP is bounded by $n \in O(\Tilde{n}^2)$. As every variable of the path-like IP occurs in at most two constraints, this bound carries over to the number of constraints.

    Hence, hollow path-like IPs can not be solved in time 
    \begin{align*}
        2^{\delta \sqrt{n}} \cdot \poly(|I|) = 2^{\Tilde{\delta} \Tilde{n}}
    \end{align*}
    for some $\delta > 0$ even when the coefficient matrix has entries bounded by $8$.
\end{proof}

\section{Summary and Open Problems}\label{sec:SOP}
We showed that the feasibility problem for integer programs with path-like constraint matrices is \NPh already when all entries of the constraint matrix are bounded by $8$. We prove this result by a reduction from $3$-SAT. 
The reduction builds a path of constraints to repeatedly filter the set of all $3$-SAT assignments for those satisfying a clause from the formula until only assignments satisfying all clauses are left. The coefficient $8$ comes as a result of handling two types of information simultaneously in one constraint. This requires coefficients for one flow of information to be multiplied by the largest value for the other information in order prevent interaction between the variables of the two and their respective information held by the variables. Hence, the largest coefficient of our reduction is the largest value for one information flow $4$ multiplied by the largest coefficient $2$ in the other information flow. 
Thus, solving such IPs is harder than than solving IPs where the coefficients are bounded by $1$, which is known to be polynomial-time solvable (or more precisely, any IP with constraint matrix where the sum ob absolute values per column is bounded by $2$). 

This leaves two questions: First, what is the complexity of integer programs with path-like constraint matrices with coefficients smaller than $8$. Second, are integer programs with path-like constraint matrices but without upper bounds on the variables, that is, ILPs of the form $\{\tup x\,\mid\,A\tup x=\tup b$, $ \tup x\ge 0\}$, \NPh as well.

\bibliography{ref}
\bibliographystyle{plain}

\end{document}